\newtheoremstyle{theoremsty}
{5pt} 
{5pt} 
{} 
{} 
{\bfseries} 
{.} 
{ } 
{} 
\theoremstyle{theoremsty}
\newtheorem{proposition}{Proposition}
\newtheorem{lemma}{Lemma}
\theoremstyle{remark}
\newtheorem{remark}{Remark}
\def \A{\mathsf{A}}
\def \B{\mathsf{B}}
\newcommand{\Not}{%
  \relax{\sim}
}
\title{\LARGE \bf
Sequential Binary Hypothesis Testing with Competing Agents under Information Asymmetry
}
\author{Aneesh Raghavan, M. Umar B. Niazi, and Karl H. Johansson
\thanks{This work was partly supported by the Swedish Research Council's Distinguished Professor grant 2017-01078, Knut and Alice Wallenberg Foundation's Wallenberg Scholar grant, the Swedish Strategic Research Foundation's FuSS SUCCESS grant, and the European Union’s Horizon Research and Innovation Programme under Marie Skłodowska-Curie grant agreement No. 101062523.}
\thanks{The authors are with the Division of Decision and Control Systems, KTH Royal Institute of Technology, Stockholm SE-100 44, Sweden. Emails: {\tt\small aneesh@kth.se}, {\tt\small mubniazi@kth.se}, {\tt\small kallej@kth.se}}%
}
\begin{document}
\maketitle
\thispagestyle{empty}
\pagestyle{empty}

\begin{abstract}
This paper concerns sequential hypothesis testing in competitive multi-agent systems where agents exchange potentially manipulated information. Specifically, a two-agent scenario is studied where each agent aims to correctly infer the true state of nature while optimizing decision speed and accuracy. At each iteration, agents collect private observations, update their beliefs, and share (possibly corrupted) belief signals with their counterparts before deciding whether to stop and declare a state, or continue gathering more information. The analysis yields three main results: (1)~when agents share information strategically, the optimal signaling policy involves equal-probability randomization between truthful and inverted beliefs; (2)~agents maximize performance by relying solely on their own observations for belief updating while using received information only to anticipate their counterpart's stopping decision; and (3)~the agent reaching their confidence threshold first cause the other agent to achieve a higher conditional probability of error. Numerical simulations further demonstrate that agents with higher KL divergence in their conditional distributions gain competitive advantage. Furthermore, our results establish that information sharing---despite strategic manipulation---reduces overall system stopping time compared to non-interactive scenarios, which highlights the inherent value of communication even in this competitive setup.
\end{abstract}

\section{Introduction}

Decision-makers in multi-agent systems face uncertainty regarding both the environment and other agents' behavior. In this paper, we consider a scenario where agents interact with each other to improve their decisions while balancing two conflicting factors: the potential for complementary information received from their neighboring agents, and the risk that the information received might be misleading. 
In particular, we study a sequential binary hypothesis testing problem where two agents interact with each other to competitively infer the true state of nature. At each iteration, both agents collect local observations that are statistically related to the true state, update their beliefs derived from these observations, and share (possibly manipulated) beliefs with their counterparts. In this sequential framework, agents must make decisions over time based on the accumulated information, i.e., their own sets of observations and belief signals received from their counterparts. Using this information, both agents can choose from two possible actions at every iteration: stop and declare the state they believe to be true, or continue gathering more information. 

Despite the information exchange between agents, the problem remains competitive as each agent independently aims to infer the true state both quickly and accurately, while receiving possibly misleading information from the other agent. The information asymmetry arises because agents have access to only their local observations and knowledge of their respective marginal distributions.
Such sequential multi-agent decision-making problems under competition and information asymmetry are ubiquitous across various domains, spanning both economic and engineering applications.
For instance, in financial markets, traders make investment decisions based on private information and market signals, where strategic misrepresentation of beliefs can lead to market manipulation \cite{foster1994}. Similarly, in smart grids, distributed energy resources make generation and consumption decisions based on beliefs about grid conditions while receiving strategic signals from other resources competing for limited capacity or favorable pricing \cite{saad2012}. 
Other applications include decentralized supply chains \cite{guo2014}, resilient sensor networks \cite{ren2018binary}, and federated learning systems \cite{gupta2023}.


The applications above share a common structure: agents with access to only local information make sequential decisions while incorporating potentially misleading signals from others. The fundamental challenge lies in designing robust decision rules when information sources cannot be fully trusted, yet contain valuable insights that could improve decision quality. We address this challenge through a novel formulation that captures the essence of unreliable information exchange while maintaining analytical traceability.

\subsection{Related Literature} 

Decentralized hypothesis testing under information asymmetry is a classical problem in detection theory and distributed control. Tenney and Sandell \cite{tenney1981} established the collaborative framework for decentralized detection where sensors make local decisions based on independent observations, showing that optimal strategies typically involve likelihood ratio tests with coupled thresholds. Teneketzis and Varaiya \cite{teneketzis1982decentralized} extended these concepts to sequential problems, demonstrating how threshold-based stopping rules operate when agents must decide whether to continue gathering information.
Using a game-theoretic framework for multi-agent sequential hypothesis testing, \cite{kim2014} establishes that belief states serve as sufficient statistics for conditionally independent observations. However, these works assume that agents share true beliefs with their counterparts.

Strategic interactions in hypothesis testing under adversarial frameworks have also been studied recently. For instance, \cite{ren2018binary} analyzes the tradeoff between security and efficiency in binary hypothesis testing with Byzantine sensors, while \cite{vamvoudakis2014detection} and \cite{barni2014binary} formulate a similar problem as a zero-sum partial information game. 
However, our work considers environments where agents may act in their self-interest, but do not necessarily seek to inflict harm.

A related line of research addresses how agents can collectively learn despite limited information sharing. In this regard, a distributed hypothesis testing approach using a ``min-rule'' rather than belief-averaging has been proposed by \cite{mitra2020new}. Similarly, \cite{saritas2019} examines hypothesis testing as a signaling game with subjective priors or misaligned objectives. Another particularly relevant work is \cite{lalitha2018social}, which analyzes distributed hypothesis testing where network nodes communicate beliefs to neighbors and shows that beliefs on wrong hypotheses converge to zero exponentially fast. 
While these distributed approaches often focus on consensus-building protocols, our work specifically addresses the strategic tension between information sharing and withholding by modeling the possibility of deceptive signaling between agents and determining optimal response strategies.

\subsection{Our Contributions}


In this paper, we examine competitive sequential binary hypothesis testing in which agents may manipulate the information they share with their counterparts. The reason for deliberately manipulating the information is to delay the other agent in inferring the true state. Our work differs from existing literature in two key aspects: first, unlike adversarial models, we focus on strategic behavior where agents act in self-interest without malicious intent; second, unlike consensus-building protocols or collaborative models, we analyze the strategic nature of information exchange and develop optimal response strategies within a competitive decision-making framework.

Our results are summarized as follows. First, we show that when agents strategically share information, the optimal signaling policy involves randomizing between truthful and inverted beliefs (\cref{Proposition 2}), which results from effectively maximizing uncertainty while preserving information content. Second, we prove that it is optimal for agents to not incorporate information received from others in their local beliefs, relying solely on their own observations until the stopping time is detected (\cref{Proposition 3}). 
However, information received from others serves the agents to detect when their counterparts will stop. 
Finally, we establish that this dynamic creates an advantage for the agent achieving the confidence threshold before the other (\cref{prop:prob-error}). That is, the agent who reaches their confidence threshold first achieves a bounded conditional probability of error, while the other agent faces potentially less accurate inference.
Our information-theoretic interpretations through entropy offer an elegant explanation for these somewhat counterintuitive results.

\section{Problem Formulation}
This section describes the model of sequential multi-agent binary hypothesis testing and formulates the problem.

\begin{figure}[!t]
    \centering
    \includegraphics[width=0.7\linewidth]{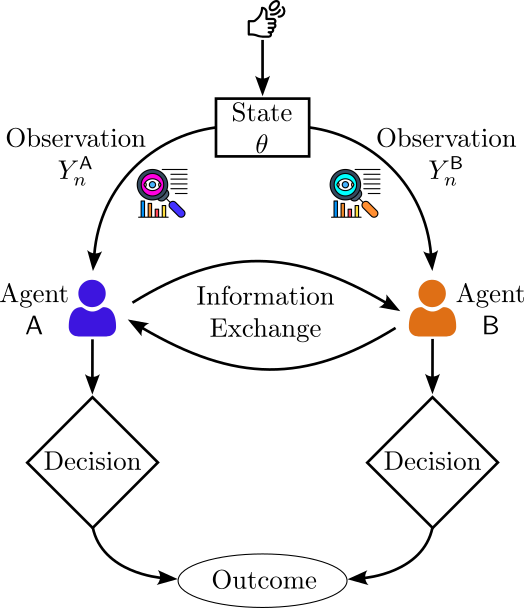}
    \caption{The two-agent binary hypothesis testing setup.}
    \label{fig:block-diagram-setup}
\end{figure}

\subsection{Setup}
Suppose there are two possible states of nature, $\theta^0$ and $\theta^1$, that are mutually exclusive. 
Consider two agents, $\A$ and $\B$, each collecting its local set of observations, which are statistically related to the true state of nature. Each agent has access to the joint distribution of their respective observations and the state. Given the observations, the objective of the agents is to find the true state $\theta$ competitively. This setup is illustrated in \cref{fig:block-diagram-setup} and formalized below:

\begin{itemize}
\item Both agents operate on the same time scale, i.e., they act simultaneously. Time is considered to be discrete and denoted by subscripts $n\in\mathbb N$, where $n$ will sometimes be also referred to as iteration. At every iteration $n$, there are multiple episodes where agents carry out multiple steps (e.g., interact with each other and make decisions).

\item The true state of nature $\theta\in\Theta \triangleq \{\theta^0,\theta^1\}$ remains fixed during a given experiment.

\item The observations collected by agent~$i$, where $i\in\{\A,\B\}$, are denoted by $Y^i_n\in S^i$, where $S^i$ is a finite set of real numbers or real vectors of finite dimension. 

\item The state of nature $\theta$ and the observations $\{Y^\A_n, Y^\B_n\}_{n\in\mathbb N}$ (and functions of these observations) are considered to be random variables on an abstract probability space $(\Omega, \mathcal{F}, \mathbb{P})$. That is, $\theta:\Omega\to \Theta$ and $Y_n^i:\Omega\to S^i$ are measurable functions, for $i\in\{\A,\B\}$. 

\item None of the agents have access to the joint distribution
$\mathbb{P}(\theta = \theta^k \cap \{Y^\A_t = y^\A_t\}^n_{t=1} \cap \{Y^\B_t = y^\B_t\}^n_{t=1} )$, where $k\in\{1,2\}$ and $(y^\A_t,y^\B_t)\in S^\A\times S^\B$. 
Each agent~$i$ knows only its marginal distribution, i.e., 
$\mathbb{P}(\theta = \theta^k \cap \{Y^i_t = y^i_t\}^n_{t=1})$. 
Moreover, both agents know a common prior $\mathbb P(\theta=\theta^k)$ and their respective conditional probabilities $\mathbb P(\{Y_t^i=y_t^i\}_{t=1}^n \mid \theta=\theta^k)$.
\end{itemize}

\begin{figure}[!t]
    \centering
    \includegraphics[width=\linewidth]{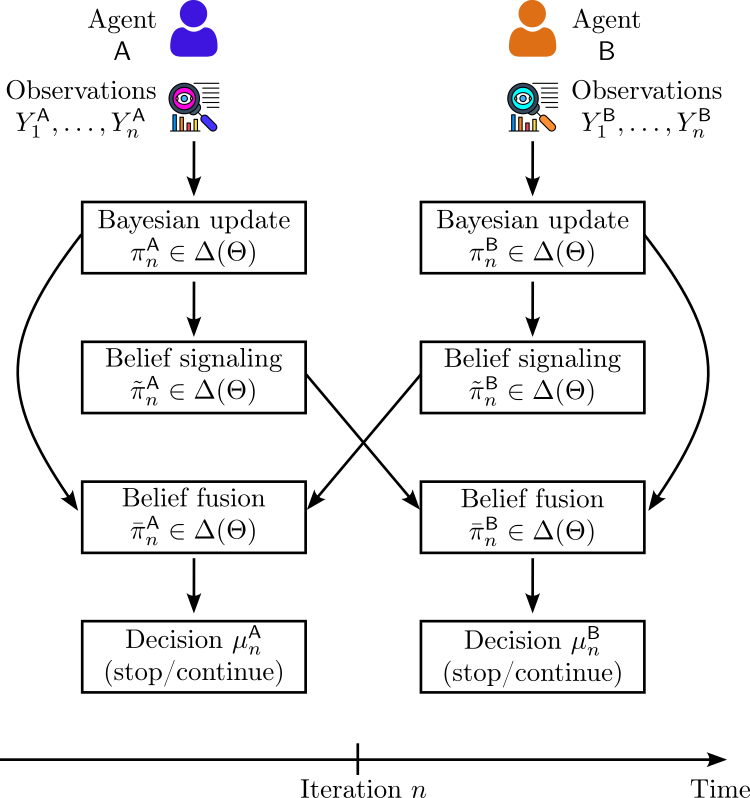}
    \caption{Steps executed by agents at each iteration of sequential binary hypothesis testing.}
    \label{fig:signaling-decision}
\end{figure}

\subsection{Belief Signaling and Decision Policies}
During iteration $n\in\mathbb N$, as illustrated in \cref{fig:signaling-decision}, each agent~$i\in\{\A,\B\}$ executes the following steps in sequence:
\begin{enumerate}
    \item Update the local belief $\pi^i_n \in \Delta(\Theta)$ based on the sample $Y^i_n$ collected. We consider a Bayesian update: 
     \begin{align*}
        \pi^i_n(\theta^k) &= \mathbb{P}(\theta = \theta^k \mid \{Y^i_t = y^i_t\}^n_{t=1}).
    \end{align*}
    \item Reveal a (possibly corrupted) belief $\tilde\pi^i_n$ to the other agent~$\Not i$, where\footnote{We use the notation $\Not i$ to denote the agent who is not $i$. Similarly, $\theta^{\Not k}$ means the state that is not $\theta^k$.}
        $\tilde\pi^i_n = \nu^i(\pi^i_n)$
    with $\nu^i:\Delta(\Theta)\to L^{1}(\Omega, \mathcal{F}, \mathbb{P})$ the \textit{belief signaling policy}.
    \item Modify the local belief $\bar\pi^i_n \in \Delta(\Theta)$ by incorporating the belief revealed by the other agent via \textit{belief fusion policy}. 
    \item Decide whether to stop and declare $D^i_n = \mu^i_n(\bar\pi^i_n)$ to be the most likely state of nature according to the modified belief $\bar\pi^i_n$ or continue with collecting the next observation $Y^i_{n+1}$. Here, $\mu^i_n: \Delta(\Theta) \to \Theta$ is the \textit{decision policy} of agent~$i$.
\end{enumerate}

\emph{Belief signaling policy.}
We consider the following structure of the belief signaling policy:
\begin{equation}
\label{eq:signaling}
    \nu^i ( \pi)  =
    \left\{\begin{array}{ll}
      \pi ,   & \text{with probability} \; \alpha^i  \\
      1 - \pi ,    &  \text{with probability} \; 1- \alpha^i
    \end{array}\right.
\end{equation}
where $0 \leq \alpha^i \leq 1$ is a design parameter corresponding to how much $i$ should be truthful. In particular, agent~$i$ flips a biased coin (probability of head = $\alpha^{i}$, probability of tail = $1 - \alpha^{i}$) to decide what signal should be sent. If the outcome is head, the agent sends their true belief ($\tilde\pi^{i}_n = \pi^{i}_n$), else the agent sends the inverted belief ($\tilde\pi^{i}_n = 1- \pi^{i}_n$) to agent~$\Not i$. The outcome of the coin toss for agent~$i$ is denoted by $Z^{i}_{n}$, where $Z^{i}_{n} : \Omega \to \{0,1\}$ is a measurable function. Thus, $\{Z^i_n$\} is a  Bernoulli process and  for agent~$i$, the process $\{Z^i_n$\} is assumed to be independent of the process $\{Y^i_n$\}.  The belief distribution of $\theta$ due to $\tilde\pi^i_n$ is denoted by 
\[
\begin{bmatrix}
\mathbb{P}(\theta = \theta^1 \mid  \{Y^i_t = y^i_t\}^n_{t=1})\\
\mathbb{P}(\theta = \theta^0 \mid  \{Y^i_t = y^i_t\}^n_{t=1})
\end{bmatrix}_{\Big|_{i \to \Not i }}. 
\]

\emph{Belief fusion policy.}
For each agent~$i$, we consider the belief fusion policy to be a convex combination of its own belief and the belief revealed by the other agent:
\begin{multline}
\label{eq:belief-fusion}
    \bar\pi^i_n = w^i_n \pi^i_n \\  
    +(1-w^i_n) \begin{bmatrix}
    \mathbb{P}(\theta = \theta^1 \mid  \{Y^{\Not i}_t = y^{\Not i}_t\}^n_{t=1})\\
    \mathbb{P}(\theta = \theta^0 \mid  \{Y^{\Not i}_t = y^{\Not i}_t\}^n_{t=1})
    \end{bmatrix}_{\Big|_{{\Not i} \to  i }}
\end{multline}
where $0 \leq w^i_n\leq 1$ is a design parameter corresponding to the weight $i$ puts on its own observations as compared to the belief revealed by the other agent. Note that the aim of the belief fusion policy \eqref{eq:belief-fusion} is to efficiently incorporate the belief revealed by the other agent to infer the true state of nature as fast as possible.

\emph{Decision policy.}
Following the literature on stopping time problems \cite{shiryaev2007optimal}, \cite{tartakovsky2014sequential}, we consider the decision policy $\mu^i_n(\bar\pi^i_n)$ to be a threshold policy: 
\begin{equation}
\label{Eq: Decision-Policy}
    \mu^i_n(\bar\pi^i_n)  = 
    \left\{\arraycolsep=2pt\begin{array}{ll}
    \text{stop}, ~ D^i_n = \theta^1; & \text{if} ~ \bar\pi^i_n(\theta^0) \leq T^i_{\theta^0} \\
    \text{stop}, ~ D^i_n = \theta^0;  &  \text{if} ~ \bar\pi^i_n(\theta^1) \leq T^i_{\theta^1} \\
    \text{continue}, ~ D^i_n = \varnothing; & \text{otherwise}
\end{array}\right.
\end{equation}
where $T^i_{\theta^0}, T^i_{\theta^1} \in[0,1]$ are threshold parameters to be designed. The decision policy \eqref{Eq: Decision-Policy} aims to infer the true state of nature quickly and achieve the desired level of accuracy.

\subsection{Problem Definition}
Following stochastic control formulations of the hypothesis testing problem as in \cite{kumarvaraiya2015, tenney1981, teneketzis1982decentralized}, we consider the cost function for agent $i$ to be
\begin{multline*}
    C^i(D^i_n) = \mathbb{E}_{\mathbb{P}}\Big[nc^i \mathbf{1}_{D^i_n = \varnothing} + \\ \hat c^i \mathbf{1}_{D^i_n \neq \varnothing, D^i_n \neq \theta}  ~\Big|~ \{Y^{i}_{t}, \tilde{\pi}^{\Not i}_{t}\}^n_{t=1} \Big]
\end{multline*}
where $c^i>0$ is the cost associated with continuing and collecting the next sample $Y^i_{n+1}$, and $\hat c^i>0$ is the cost associated with stopping when the declared state $D^i_n$ is wrong (i.e., $D^i_n\neq\varnothing$ and $D^i_n\neq \theta$). The optimization problem for each agent at iteration $n$ is thus defined as:
\begin{align}
    \underset{w^i_n, T^i_{\theta^0}, T^i_{\theta^1}}{\min } C^i(D^i_n) \label{Problem 1}
\end{align}
where $D^i_n \coloneqq D^i_n(w^i_n, T^i_{\theta^0}, T^i_{\theta^1})$.

Since the joint distribution between $\theta$, $\{Y^\A_t\}_{t=1}^n$, and $\{Y^\B_t\}^n_{t=1}$ is unknown to agent~$i$, the joint distribution between $\theta$ and $\{Y^{i}_{t} ,\tilde{\pi}^{\Not i}_{t}\}^n_{t=1}$ is  also unknown to the agent. Therefore, finding the conditional probability $\mathbb{E}_{\mathbb{P}} \big[ \mathbf{1}_{\theta = \theta^1} \mid \{Y^{i}_{t}, \tilde{\pi}^{\Not i}_{t}\}^n_{t=1} \big]$ using Bayes theorem is not possible. In order to analyze the optimization problem \eqref{Problem 1}, define
\begin{align*}
    &\mathbb{E}_{\mathbb{P}^i} \Big[ \mathbf{1}_{\theta = \theta^1}   \Big| \{Y^{i}_{t}, \tilde{\pi}^{\Not i}_{t}\}^n_{t=1}   \Big] \triangleq  \bar\pi^i_n (\theta^1), \; \text{and} \;    \\
    &\mathbb{E}_{\mathbb{P}^i} \Big[ \mathbf{1}_{\theta = \theta^0}   \Big| \{Y^{i}_{t}, \tilde{\pi}^{\Not i}_{t}\}^n_{t=1}   \Big] \triangleq  \bar\pi^i_n (\theta^0) = 1-  \bar\pi^i_n (\theta^1).
\end{align*}
Then, the cost function of \eqref{Problem 1} can be simplified as 
\begin{multline*}
   C^i(D^i_n) =  nc^i\mathbf{1}_{D^i_n = \varnothing} \\ + \hat{c}^i \Big[\mathbf{1}_{D^i_n = \theta^1}\bar\pi^i_n (\theta^0) + \mathbf{1}_{D^i_n = \theta^0}\bar\pi^i_n (\theta^1) \Big].
\end{multline*}
Note that since
\begin{multline}
    \mathbf{1}_{D^i_n = \theta^1}\bar\pi^i_n (\theta^0) + \mathbf{1}_{D^i_n = \theta^0}\bar\pi^i_n (\theta^1) \\ = \mathbb{P}(\mu^{i}_{n}(\bar\pi^{i}_{n}) \neq \theta \mid \{Y^{i}_{t}, \tilde{\pi}^{\Not i}_{t}\}^n_{t=1}  ) \label{Eq: Probability of error}
\end{multline}
it follows that
\begin{align*}
    C^i(D^i_n) =  nc^i&\mathbf{1}_{D^i_n =\varnothing}
    + \\
    &\hat{c}^i \mathbb{P}(\mu^{i}_{n}(\bar\pi^{i}_{n}) \neq \theta \mid \{Y^{i}_{t}, \tilde{\pi}^{\Not i}_{t}\}^n_{t=1}). 
\end{align*}

Let $\tau^i$ denote the random stopping time at which agent~$i$ stops. Then, 
\begin{align*}
    C^i(D^i_{\tau^i}) =  c^i \tau^i + \hat{c}^i \mathbb{P}(\mu^{i}_{\tau^i}(\bar\pi^{i}_{\tau^i}) \neq \theta \mid \{Y^{i}_{t}, \tilde{\pi}^{\Not i}_{t}\}^n_{t=1} ). 
\end{align*}
Thus, the cost is a tradeoff between the number of samples $\tau^i$ and the conditional probability of error $\mathbb{P}(\mu^{i}_{\tau^i}(\bar\pi^{i}_{\tau^i}) \neq \theta \mid \{Y^{i}_{t}, \tilde{\pi}^{\Not i}_{t}\}^n_{t=1})$. Different values of $c^i$ and $\hat{c}^i$, more specifically the ratio 
$\hat{c}^i/c^i$,
lead to different probabilities of error for a given stopping time and vice-versa. When the ratio is high, then the emphasis is on the probability of error, and the agent collects samples until the probability of error is sufficiently small. When the ratio is very low, the emphasis is on the stopping time, and the agent collects fewer samples at the expense of a high probability of error. Thus, problem \eqref{Problem 1} can be viewed in terms of a Lagrangian formulation:
\begin{subequations}
\label{Problem 2}
\begin{align}
    \underset{\{w^i_{n}\}^{\tau^i}_{n=1}, T^i_{\theta^0}, T^i_{\theta^1} }{\min}\quad &\tau^i  \label{Problem 2: Objective} \\
    \text{subject to}\quad & \mathbb{P}(\mu^{i}_{\tau^i}(\bar\pi^{i}_{\tau^i}) \neq \theta \mid \{Y^i_t\}^{\tau^i}_{t=1}) \leq \beta^i \label{Problem 2: Constraint}
\end{align}
\end{subequations}
where different confidence thresholds $\beta^i$'s correspond to different stopping thresholds $T^i_{\theta^0}, T^i_{\theta^1}$.

\emph{Signaling problem.}
In the competitive setup, the agent who wins is the one achieving the confidence threshold before the other agent. To ensure fairness among the agents, we impose the condition that $\beta^1 = \beta^2 = \beta$. In other words, the agent who stops first while ensuring that the probability of error conditioned on its observations is less than equal to $\beta$ wins the competition. For any belief fusion policy and thresholds of agent~${\Not i}$, agent~$i$ aims to prevent $\Not i$ from winning the competition by optimizing over the belief signaling policy. Since the belief signaling policy \eqref{eq:signaling} is parameterized by $\alpha^i$, the objective of agent~$i$ is
\begin{align}
\underset{\alpha^i }{\max}\quad & \mathbb{P}(\mu^{\Not i}_{n}(\bar\pi^{\Not i}_{n}) \neq \theta \mid \{Y^{\Not i}_{t}, \tilde{\pi}^{i}_{t}\}^n_{t=1} ). \label{Problem 3}
\end{align}
That is, agent~$i$ chooses the truthfulness probability $\alpha^i$ such that it maximizes $\Not i$'s conditional probability of error. Although solving \eqref{Problem 3} is not possible for agent~$i$ because it doesn't know the parameters of $\Not i$'s belief fusion and decision policies, one can indirectly characterize the optimal value of $\alpha^i$. Note that the aim of the signaling policy according to \eqref{Problem 3} is to not reveal a lot of information so that the other agent gains sufficient information to correctly and rapidly infer the true state, but also to not corrupt it too much so that the signal becomes predictable enough and can be simply inverted. 

\emph{Competitive stopping problem.}
By incorporating the signaling problem \eqref{Problem 3} into the stopping problem \eqref{Problem 2}, we obtain a competitive stopping problem as
\begin{subequations}
\label{Problem 4}
\begin{align}
\underset{\{w^i_{n}\}^{\tau^i}_{n=1}, T^i_{\theta^0}, T^i_{\theta^1} }{\min} &\tau^i  \label{Problem 4: Objective} \\
\text{subject to}~ &  \underset{\alpha^{\Not i}}{\max}~ \mathbb{P}(\mu^{i}_{\tau^i}(\bar\pi^{i}_{\tau^i}) \neq \theta \mid \{Y^i_t\}^{\tau^i}_{t=1}) \leq \beta^i. \label{Problem 4: Constraint}
\end{align}
\end{subequations}
That is, agent~$i$ minimizes its stopping time subject to achieving conditional probability of error irrespective of signals it receives from $\Not i$.
Note that the belief fusion policy is parameterized by $w^i_n$ and the decision policy is parametrized by $T^i_{\theta^0}$ and $T^i_{\theta^1}$. That is, the problem boils down to designing parameters $w^i$, $T^i_{\theta^0}$, and $T^i_{\theta^1}$ of the belief fusion policy \eqref{eq:belief-fusion} and the decision policy \eqref{Eq: Decision-Policy} such that the above objective is achieved.

\section{Analysis}
This section presents our key results on sequential binary hypothesis testing with competing agents. We characterize the optimal signaling and belief fusion policies. We also derive the optimal stopping thresholds and the conditional probability of errors of the agents.

\subsection{Optimal Signaling}

We explore how agents should optimally manipulate their beliefs when communicating with their counterparts. Considering the belief signaling policy \eqref{eq:signaling}, where each agent has a ``truthfulness'' parameter $\alpha^i$, we observe that setting this truthfulness parameter to exactly 0.5 creates an optimal signaling strategy. This means the agent should randomly choose between sending their true belief or its opposite with equal probability, regardless of the other agent's stopping thresholds or belief fusion parameters. 

\begin{proposition}\label{Proposition 2}
Consider the belief signaling policy \eqref{eq:signaling} parametrized by $0\leq \alpha^i\leq 1$. Then, for any belief fusion parameter sequence, $\{w^{\Not i}_{n}\}$ and thresholds $T^{\Not i}_{U}, T^{\Not i}_{L}$ of agent~$\Not i$, $\alpha^i = 0.5$ is optimal with respect to \eqref{Problem 3}. 
\end{proposition}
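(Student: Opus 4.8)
The plan is to argue information-theoretically that $\alpha^i = 0.5$ renders the transmitted signal $\tilde\pi^i_n$ least informative about $\theta$, and that this minimal leakage forces the largest guaranteed conditional error on agent~$\Not i$ no matter how $\Not i$ fuses and thresholds. I would work with the scalar signal $s \triangleq \tilde\pi^i_n(\theta^1)\in[0,1]$, noting that the inversion $\pi\mapsto 1-\pi$ in \eqref{eq:signaling} acts as $s\mapsto 1-s$. Writing $p_k$ for the law of the true belief $\pi^i_n(\theta^1)$ under $\theta=\theta^k$, the law of the signal under $\theta^k$ is the two–point mixture $g_k^{(\alpha)}(s) = \alpha\,p_k(s) + (1-\alpha)\,p_k(1-s)$, since a truthful draw contributes $p_k(s)$ and an inverted draw contributes $p_k(1-s)$. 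Because the coin $\{Z^i_n\}$ is independent of $\{Y^i_n\}$ and of $\{Y^{\Not i}_n\}$, conditioning on $\Not i$'s own data leaves this structure intact.

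First I would record two structural facts about $g_k^{(\alpha)}$. (i)~\emph{Symmetry}: a direct substitution gives $g_k^{(1-\alpha)}(s) = g_k^{(\alpha)}(1-s)$, i.e. replacing $\alpha$ by $1-\alpha$ is the same as relabeling the signal alphabet by the bijection $s\mapsto 1-s$. (ii)~\emph{Affineness}: each $g_k^{(\alpha)}$ is affine in $\alpha$. I would then take as the figure of merit the mutual information $I_\alpha \triangleq I(\theta;\tilde\pi^i_n \mid \{Y^{\Not i}_t\}^n_{t=1})$ that the signal carries about the state. For a fixed input law $\mathbb P(\theta)$, mutual information is convex in the channel $\{g_k^{(\alpha)}\}$; composed with the affine map $\alpha\mapsto\{g_k^{(\alpha)}\}$ this makes $\alpha\mapsto I_\alpha$ convex on $[0,1]$. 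Since mutual information is invariant under a relabeling bijection, fact~(i) yields $I_\alpha = I_{1-\alpha}$. A convex function on $[0,1]$ symmetric about $1/2$ attains its minimum at $1/2$, so $I_\alpha$ is minimized exactly at $\alpha^i=0.5$; this is the entropy-maximizing statement the paper advertises.

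It remains to convert ``least informative signal'' into ``largest error for $\Not i$'' uniformly over $\Not i$'s policies. Here I would invoke the data-processing inequality together with Fano: the declaration $\mu^{\Not i}_n(\bar\pi^{\Not i}_n)$ is a deterministic function of $(\{Y^{\Not i}_t\},\{\tilde\pi^i_t\})$ through the fixed fusion weights $\{w^{\Not i}_n\}$ and thresholds $T^{\Not i}_U,T^{\Not i}_L$, so $\theta \to (\{Y^{\Not i}_t\},\{\tilde\pi^i_t\}) \to \mu^{\Not i}_n$ is a Markov chain. For binary $\theta$, Fano's inequality gives $H_b\big(P_e^{\Not i}\big)\ge H(\theta\mid \{Y^{\Not i}_t\},\{\tilde\pi^i_t\}) = H(\theta\mid\{Y^{\Not i}_t\}) - I_\alpha$, where $P_e^{\Not i}$ is the conditional error in \eqref{Problem 3}. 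The term $H(\theta\mid\{Y^{\Not i}_t\})$ is independent of $\alpha$, so minimizing $I_\alpha$ maximizes this lower bound, and it does so for \emph{every} admissible $\{w^{\Not i}_n\}$, $T^{\Not i}_U$, $T^{\Not i}_L$. Hence $\alpha^i=0.5$ maximizes the error agent~$i$ can guarantee against $\Not i$, which is optimality with respect to \eqref{Problem 3}.

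The step I expect to be the crux is this last conversion. The convexity-plus-symmetry computation pinning $I_\alpha$ at $1/2$ is routine once the two-point mixture is written down; the delicate point is that agent~$i$ cannot evaluate $\Not i$'s error directly, since the joint law is unknown to it, so the argument must be made policy-independent. Fano supplies a lower bound maximized at $\alpha=0.5$ for every fixed fusion/decision rule, which is exactly the robust sense in which $0.5$ is optimal; I would also remark that the endpoints $\alpha\in\{0,1\}$ are the worst choices for agent~$i$, matching the intuition that a fully inverted and therefore perfectly predictable signal can simply be flipped back by $\Not i$. Care is needed to state the conditioning on $\{Y^{\Not i}_t\}$ correctly and to justify the conditional version of the convexity/symmetry argument, which goes through because $\{Z^i_n\}$ is independent of both observation processes.
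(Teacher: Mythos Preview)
Your route is genuinely different from the paper's. The paper works directly with the \emph{expected} transmitted belief $\alpha\,\pi^i_n+(1-\alpha)(1-\pi^i_n)$, observes that under the receiver's rule the conditional error equals the smaller of the two components of this vector, and shows that this minimum is piecewise affine and concave in $\alpha$ with its peak at $1/2$. Your argument instead lower-bounds the error via Fano and then minimizes $I_\alpha$ using convexity of mutual information in the channel together with the relabeling symmetry $g_k^{(1-\alpha)}(s)=g_k^{(\alpha)}(1-s)$. The convexity--symmetry part is clean and essentially matches the paper's post-proof ``entropy-based interpretation,'' but it is not the paper's proof.

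The Fano step, however, does not deliver the proposition as stated, and you have half-flagged this yourself. Fano yields $H_b(P_e^{\Not i})\ge H(\theta\mid\{Y^{\Not i}_t\})-I_\alpha$, so minimizing $I_\alpha$ maximizes a \emph{lower bound} on $P_e^{\Not i}$. The proposition claims something stronger: that $\alpha^i=0.5$ maximizes $P_e^{\Not i}$ for \emph{every} fixed $\{w^{\Not i}_n\}$ and thresholds. These do not coincide. Take a receiver with $w^{\Not i}_n=0$ who applies \eqref{Eq: Decision-Policy} directly to the received signal: choosing $\alpha^i=0$ (always invert) drives $\bar\pi^{\Not i}_n$ toward the wrong state whenever $\pi^i_n$ is decisive, producing error strictly larger than the $0.5$ induced by the uninformative $[0.5,0.5]$ at $\alpha^i=0.5$. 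Your bound cannot see this because Fano controls the best error the data can support, not the error of a fixed, possibly suboptimal, rule; what you have actually proved is a min--max statement ($\alpha^i=0.5$ maximizes the error \emph{guaranteed against the best-responding} receiver), which is weaker than the ``for any $\{w^{\Not i}_n\},T^{\Not i}_U,T^{\Not i}_L$'' claim. A second, smaller mismatch: the objective in \eqref{Problem 3} is a \emph{conditional} probability of error given $\{Y^{\Not i}_t,\tilde\pi^i_t\}$, a sample-path quantity, whereas your Fano argument bounds the averaged error; you would still need to bridge that gap. To match the paper you must argue directly with the fused belief $\bar\pi^{\Not i}_n$ and the threshold rule, as the paper does, rather than through an information lower bound.
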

\begin{proof}
Without loss of generality, we analyze the problem from agent~$\A$'s perspective.
The belief distribution of $\theta$ due to $\tilde\pi^\A_n$ can be expressed as follows. 
\begin{align*}
&\mathbb{P}(\theta = \theta^k \mid  \{Y^\A_t = y^\A_t\}^n_{t=1}) \Big|_{\A \to \B}  \\
& \qquad \triangleq \mathbb{P}(Z^\A_n = H  \cap  \theta = \theta^k \mid  \{Y^\A_t = y^\A_t\}^n_{t=1} ) \\
& \qquad \qquad \qquad + \mathbb{P}(Z^\A_n = T  \cap  \theta = \theta^{\Not k } \mid  \{Y^\A_t = y^\A_t\}^n_{t=1} ) \\
& \qquad =\mathbb{P}(Z^\A_n = H)\mathbb{P}(\theta = \theta^k \mid  \{Y^\A_t = y^\A_t\}^n_{t=1} ) \\
& \qquad \qquad \qquad + \mathbb{P}(Z^\A_n = T)\mathbb{P}(\theta = \theta^{\Not k } \mid  \{Y^\A_t = y^\A_t\}^n_{t=1} )\\
& \qquad = \alpha^\A\pi^\A_n(\theta^k) + (1-\alpha^\A)\pi^\A_n(\theta^{\Not k}),
\end{align*}
where random variable $Z^\A_n$ denotes the outcome of the coin toss (head $=H$, tail $=T$) for agent~$\A$ to decide whether to flip the belief or not. Stacking the beliefs on states gives
\begin{align*}
\begin{bmatrix}
\mathbb{P}(\theta = \theta^1 \mid  \{Y^\A_t = y^\A_t\}^n_{t=1})\\
\mathbb{P}(\theta = \theta^0 \mid  \{Y^\A_t = y^\A_t\}^n_{t=1})
\end{bmatrix}_{\Big|_{\A \to \B}}  \hspace{-7pt}= 
\begin{bmatrix}
\alpha^\A &1-\alpha^\A \\
1-\alpha^\A &\alpha^\A
\end{bmatrix} \pi^\A_n.
\end{align*}
From the definition of $\bar{\pi}^\B_n$, we note that it is affine in $w^\B_n$, $\pi^\B_{n}$, and $\tilde{\pi}^\A_n$ \textit{individually}, i.e. when one of the three is varied and the other two are kept fixed, $\bar{\pi}^\B_n$ is linear in the quantity being varied. Due to this affine structure, given any $w^\B_n, \pi^\B_{n}$, and thresholds $T^\B_{\theta^0}, T^\B_{\theta^1}$, agent $\A$ aims to choose $\alpha^\A$ such that the probability of error of $\B$ due to $\tilde\pi^\A_n$ is maximized. That is, agent $\A$ could assume $w^\B_n=0$ and suitably design $\alpha^\A$. If agent~$\B$ is to make its decision based on $\bar\pi^\B_n = \tilde\pi^\A_n$, then it decides $D^\B_n = \theta^k$ if and only if 
\begin{multline*}
\alpha^\A\pi^\A_n(\theta^{\Not k}) + (1-\alpha^\A)\pi^\A_n(\theta^k) \\ \leq \alpha^\A\pi^\A_n(\theta^k) + (1-\alpha^\A)\pi^\A_n(\theta^{\Not k}).
\end{multline*}
Thus, the objective of the agent~$\A$ is to maximize over $\alpha^\A$ the minimum of the two quantities (the right-hand side and the left-hand side) in the above inequality. Without loss of generality, suppose $D^\A_n = \theta^k$, i.e., $\pi^\A_n(\theta^{\Not k}) \leq \beta$. Then,
\begin{itemize}
    \item For $0 \leq \alpha^\A < 1/2$, the minimum is 
    \[\alpha^\A\pi^\A_n(\theta^k) + (1-\alpha^\A)\pi^\A_n(\theta^{\Not k}).\]
    \item For $1/2 < \alpha^\A \leq 1$, the minimum is
    \[\alpha^\A\pi^\A_n(\theta^{\Not k})+ (1-\alpha^\A)\pi^\A_n(\theta^k).\]
    \item For $\alpha^\A=1/2$, the both quantities are equal.
\end{itemize}
This can be verified by using $\pi^\A_n(\theta^{\Not k}) = 1- \pi^\A_n(\theta^k)$ and expanding the two terms. Thus, the minimum of the two quantities
\begin{itemize}
    \itemsep3pt
    \item $\mathbb{P}(\theta = \theta^1 \mid  \{Y^\A_t = y^\A_t\}^n_{t=1}) \Big|_{\A \to \B}$
    \item $\mathbb{P}(\theta = \theta^0 \mid  \{Y^\A_t = y^\A_t\}^n_{t=1}) \Big|_{\A \to \B}$
\end{itemize}
is concave in $\alpha^\A$ with maximum achieving at $\alpha^\A= 0.5$.
\end{proof}

\begin{figure}
    \centering
    \includegraphics[width = 0.45\textwidth]{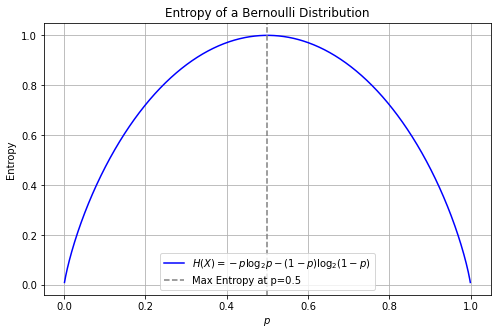}
    \caption{Entropy of the Bernoulli Distribution.}
    \label{fig:entropy}
\end{figure}

\emph{Entropy-based interpretation.}
    The intuition behind \cref{Proposition 2} connects to information entropy. By choosing a 50/50 split between truth and inversion, the agent maximizes uncertainty in their signals while still conveying useful information. This approach effectively hides the agent's true decision while ensuring the communicated belief maintains the same level of uncertainty as the original belief. Since the entropy of a belief is identical to its inverse, this strategy preserves information content while maintaining strategic ambiguity about the agent's actual position.
    
    To elucidate, recall the entropy of a random variable with Bernoulli distribution $p$:
    \begin{align*}
        H(p) = - p \log_{2}p -(1-p)\log_2(1-p).
    \end{align*}
    As seen in \cref{fig:entropy}, the entropy function possesses two geometric properties: (i)~symmetry about ${p=0.5}$ and (ii)~concavity in $p$. 
    The belief signaling policy \eqref{eq:signaling} utilizes this symmetry property. Since the entropy of $\pi^{i}_n$ or $1- \pi^i_n$ is the same, an agent can communicate to the other agent by transmitting either of them that it has achieved the desired level of uncertainty in the decision while keeping the true decision hidden. For any two independent random variables $Y$ and $Z$, it holds that $H(Y,Z) = H(Y) + H(Z \mid Y ) = H(Y) + H(Z)$. When an agent is trying to choose $\alpha^i$ such that the entropy of the distribution of the random vector $\tilde\pi^i_n$ gets maximized, it follows from the previous relation that the agent has to maximize $H(\alpha)$ because $H(\tilde\pi^i_n) = H(\alpha) + H(\pi^{i}_n)$. Since the entropy of the Bernoulli distribution gets maximized at $0.5$, it is optimal for agent~$i$ to choose $\alpha^i=0.5$. 
    \hfill $\diamond$

\begin{remark}\label{Remark 1}
The above result states that by choosing $\alpha^A =0.5$, we get 
\begin{align*}
\begin{bmatrix}
\mathbb{P}(\theta = \theta^1 \mid  \{Y^\A_t = y^\A_t\}^n_{t=1})\\
\mathbb{P}(\theta = \theta^0 \mid  \{Y^\A_t = y^\A_t\}^n_{t=1})
\end{bmatrix}_{\Big|_{\A \to \B}}  \hspace{-7pt}= 
\begin{bmatrix}
0.5 \\
0.5 
\end{bmatrix}
\end{align*}
which implies that, 
\begin{equation}
\label{eq:bar-pi}
\bar{\pi}^\B_n = w^\B_n \pi^\B_n  + (1-w^\B_n)\begin{bmatrix}
0.5 \\
0.5 
\end{bmatrix}
\end{equation}
Hence, for any  $0\leq w^\B_n <1$, the probability of error based on decision $\bar{\pi}^\B_n$ is greater than or equal to the probability of error based on decision  $\pi^\B_n$.
\hfill $\lrcorner$
\end{remark}

\subsection{Optimal Belief Fusion and Stopping Time}

We analyze how agents should optimally combine their own beliefs with information received from others, and when they should stop the process to make a final decision about the true state of nature. For the analysis, we consider belief fusion policy \eqref{eq:belief-fusion} where each agent uses a weighted combination of their local observations and the beliefs shared by other agents, alongside a threshold-based policy \eqref{Eq: Decision-Policy} to determine when to stop and declare the state.

As a preliminary to our key result, consider a single-agent setup. Notice that the decision of an agent in this setup is only a function of their Bayesian update $\pi^i_n$. Given a threshold $\beta$, the agent can verify at every iteration~$n$ whether $\pi^i_n(\theta^{\Not k}) \leq \beta$, for some $k\in\{0,1\}$. If this inequality is satisfied, then the agent stops and announces its decision as $D^i_n = \theta^k$. Otherwise, the agent collects the next observation, and the process continues.  
\begin{lemma}\label{Proposition 1}
For almost all sample paths (except a set of $\mathbb{P}$ measure 0), any desired conditional probability of error is achieved by the agents individually. That is, for every $\beta > 0$, there exists $N^i_{\beta}(\omega) \in\mathbb N$ such that either $\pi^{i}_{n}(\theta^0)(\omega) < \beta$ or  $\pi^{i}_{n}(\theta^1)(\omega) < \beta$, $\forall n > N^i_{\beta}$, for almost all $\omega \in \Omega$.
\end{lemma}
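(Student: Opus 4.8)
The plan is to prove the stronger statement that the posterior mass an agent assigns to the \emph{wrong} hypothesis converges to zero almost surely; the claimed disjunction follows immediately, since if the true state is $\theta^0$ then $\pi^i_n(\theta^1)\to 0$ a.s., so for every $\beta>0$ there is a (random) index beyond which $\pi^i_n(\theta^1)<\beta$, and symmetrically when the true state is $\theta^1$. Throughout I work on the single-agent model of the lemma, so no signaling or fusion enters and $\pi^i_n$ is the pure Bayesian posterior from $\{Y^i_t\}_{t=1}^n$.

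First I would pass to the log-posterior-odds. Writing $p^i_k$ for the conditional law of $Y^i$ under $\theta^k$ and exploiting the conditional independence of the observations given $\theta$ (the standard assumption in sequential hypothesis testing, under which the belief is a sufficient statistic), Bayes' rule factors the posterior odds as
\[
\frac{\pi^i_n(\theta^1)}{\pi^i_n(\theta^0)} = \frac{\mathbb{P}(\theta=\theta^1)}{\mathbb{P}(\theta=\theta^0)}\,\prod_{t=1}^n \frac{p^i_1(Y^i_t)}{p^i_0(Y^i_t)},
\]
so that, taking logarithms,
\[
\log\frac{\pi^i_n(\theta^1)}{\pi^i_n(\theta^0)} = \log\frac{\mathbb{P}(\theta=\theta^1)}{\mathbb{P}(\theta=\theta^0)} + \sum_{t=1}^n \ell_t, \qquad \ell_t \triangleq \log\frac{p^i_1(Y^i_t)}{p^i_0(Y^i_t)}.
\]
The key object is therefore the random walk $\sum_{t=1}^n \ell_t$.

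The heart of the argument is a drift computation together with the strong law of large numbers. Conditioning on $\{\theta=\theta^0\}$, the increments $\{\ell_t\}$ are i.i.d.\ with mean $\mathbb{E}_0[\ell_t] = -D(p^i_0\,\|\,p^i_1)$, the negative Kullback--Leibler divergence between the two conditional laws. Since $S^i$ is finite the increments are integrable, and provided the hypotheses are distinguishable (i.e.\ $p^i_0\neq p^i_1$, so $D(p^i_0\,\|\,p^i_1)>0$) the drift is strictly negative. By the SLLN, $\tfrac1n\sum_{t=1}^n \ell_t \to -D(p^i_0\,\|\,p^i_1)<0$ a.s., the partial sums diverge to $-\infty$, and hence the log-odds tend to $-\infty$; thus $\pi^i_n(\theta^1)\to 0$ a.s.\ on $\{\theta=\theta^0\}$. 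The symmetric argument gives $\pi^i_n(\theta^0)\to 0$ a.s.\ on $\{\theta=\theta^1\}$. Taking the union of the two $\mathbb{P}$-null exceptional sets and averaging over the prior yields the claim for $\mathbb{P}$-almost every $\omega$, and for each such $\omega$ the existence of a finite threshold index $N^i_\beta(\omega)$ is then immediate.

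The main obstacle is pinning the limit down to exactly zero rather than merely establishing convergence. The likelihood ratio $\prod_{t\le n} p^i_1(Y^i_t)/p^i_0(Y^i_t)$ is a nonnegative $\mathbb{P}_0$-martingale, so martingale convergence alone guarantees an a.s.\ finite limit but cannot exclude a strictly positive one; it is precisely the strictly negative drift, i.e.\ the positivity of the KL divergence, that forces the limit to vanish, which is why the distinguishability assumption on $p^i_0$ and $p^i_1$ is essential. A secondary technical point is the treatment of observation values $y$ with $p^i_1(y)=0$: under $\theta^0$ these either occur with zero probability or send $\ell_t=-\infty$, which only reinforces the divergence, so once the finite-support integrability of $\ell_t$ is checked they cause no difficulty. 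If one wished to drop the identically-distributed assumption and retain only conditional independence, the SLLN step would need to be replaced by a Kolmogorov-type law or a martingale argument on the normalized log-odds.
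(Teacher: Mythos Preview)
Your argument is correct, but the route differs from the paper's. The paper dispatches the lemma in a single sentence by invoking Doob's martingale convergence theorem: since $\pi^i_n(\theta^k)=\mathbb{E}[\mathbf{1}_{\theta=\theta^k}\mid\mathcal{F}^i_n]$ is a bounded (hence uniformly integrable) martingale, it converges almost surely to $\mathbb{E}[\mathbf{1}_{\theta=\theta^k}\mid\mathcal{F}^i_\infty]$, which equals $\mathbf{1}_{\theta=\theta^k}\in\{0,1\}$ once $\theta$ is measurable with respect to the tail of the observations. You instead work with the log-posterior-odds, write them as a random walk with i.i.d.\ increments, and apply the SLLN together with the strict positivity of $D(p^i_0\|p^i_1)$ to force the walk to $\pm\infty$. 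Both routes require the same identifiability hypothesis (distinct conditional laws), but your proof makes this dependence explicit and quantitative, and---as you yourself point out---closes the gap that a bare martingale-convergence citation leaves open, namely that convergence of the likelihood ratio (or the posterior) does not by itself pin the limit to $\{0,1\}$. The paper's approach is shorter and more structural; yours is more elementary, gives the rate $D(p^i_0\|p^i_1)$ for free, and would transfer directly to the KL-based comparisons exploited later in the numerical section.
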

The proof of \cref{Proposition 1} follows from \textit{Doob's Martingale Theorem} (see \cite{billingsley2012probability, koralov2007theory}).

In our multi-agent setup, the agent achieving the desired conditional probability of error at iteration $n$
\[
\mathbb{P}(\mu^{i}_n(\bar\pi^{i}_n) \neq \theta \mid \{Y^i_t\}^n_{t=1}) \leq \beta^i
\]
will initiate the stopping, i.e., $\tau^i=n$. In the following, we show that all agents will ultimately stop the process at the same time, regardless of their local observation processes. Moreover, expanding on \cref{Remark 1}, we show that the optimal strategy for agents is to rely exclusively on their own observations when making decisions until the stopping time. 
This means agents should essentially ignore the information shared by their counterparts until they detect that one of them has achieved the desired conditional probability of error. 

\begin{proposition}\label{Proposition 3}
For almost all sample paths, both agents stop at the same time, i.e., $\tau^\A(\omega)=\tau^\B(\omega)=\tau(\omega) $. 
Moreover, for all agents $i\in\{\A,\B\}$, the optimal fusion parameter sequence with respect to \eqref{Problem 4} is given by $w^i_n= 1$, for $n \leq \tau$. For agent~$i$ initiating the stopping, the decision policy at $\tau$ is given by 
\begin{align}
\mu^{i}_{\tau}(\bar\pi^{i}_{\tau})  = 
\begin{cases}
D^{i}_{\tau} = \theta^1, & \text{if} \; \bar\pi^{i}_{\tau}(\theta^0) \leq \beta  \\
D^{i}_{\tau} = \theta^0,  &  \text{if} \; \bar\pi^{i}_{\tau}(\theta^1) \leq \beta.
\end{cases}      \label{Eq: Decision Policy Stopping Agent}
\end{align}
That is, $T^i_{\theta^0} = T^i_{\theta^1} =\beta$ is optimal for $i$ with respect to \eqref{Problem 4}.
The policy of the other agent~$\Not i$ at $\tau$ is
\begin{align}
\mu^{\Not i}_{\tau}(\bar\pi^{\Not i}_{\tau})  = 
\begin{cases}
\; D^{\Not i}_{\tau} = \theta^1 & \text{if} \; \bar\pi^{\Not i}_{\tau}(\theta^0) \leq \bar\pi^{\Not i}_{\tau}(\theta^1)  \\
\; D^{\Not i}_{\tau} = \theta^0  &  \text{if} \; \bar\pi^{\Not i}_{\tau}(\theta^1) \leq \bar\pi^{\Not i}_{\tau}(\theta^0). \label{Eq: Decision Policy Alternate Agent}
\end{cases}   
\end{align}
\end{proposition}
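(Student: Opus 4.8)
The plan is to establish the three claims in the order they are stated, since each relies on the previous. First I would fix the opponent's optimal signaling $\alpha^{\Not i}=0.5$ from \cref{Proposition 2}, so that the realized-signal belief distribution collapses to $[0.5,0.5]^\top$ and the fused belief takes the form \eqref{eq:bar-pi}. The structural observation I would lean on is that the Bayesian update in Step~1 defines $\pi^i_n$ from the agent's own observations $\{Y^i_t\}_{t=1}^n$ alone and never ingests the fused belief $\bar\pi^i_n$; hence the weights $w^i_n$ are read-only, affecting only whether the agent stops at iteration $n$ and not propagating into future beliefs, which lets me optimize them iteration-by-iteration. Writing $\theta^k$ for the more likely state under $\pi^i_n$, the conditional error of declaring $\theta^k$ is $\bar\pi^i_n(\theta^{\Not k}) = w^i_n\pi^i_n(\theta^{\Not k}) + (1-w^i_n)/2$, which is nondecreasing in $(1-w^i_n)$ whenever $\pi^i_n(\theta^{\Not k})\le 1/2$ --- precisely the content of \cref{Remark 1}. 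Therefore the error constraint \eqref{Problem 4: Constraint} is satisfied no later under $w^i_n=1$ than under any $w^i_n<1$; moreover, with $w^i_n=1$ the decision ceases to depend on $\tilde\pi^{\Not i}_n$, so the inner maximization over $\alpha^{\Not i}$ in \eqref{Problem 4: Constraint} becomes vacuous. Minimizing the stopping time \eqref{Problem 4: Objective} thus forces $w^i_n=1$ for all $n\le\tau$.

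With $w^i_n=1$ we have $\bar\pi^i_n=\pi^i_n$, so each agent's confidence is governed solely by its own belief. By \cref{Proposition 1}, for almost every $\omega$ there is a finite first time $\sigma^i(\omega)$ at which $\pi^i_n(\theta^{\Not k})\le\beta$ for some $k$; I set $\tau=\min(\sigma^\A,\sigma^\B)$ and call the agent attaining this minimum the initiator. The crux is a detection argument: the realized signal $\tilde\pi^i_n$ equals either $\pi^i_n$ or $1-\pi^i_n$, and since the flip merely swaps the two coordinates, the smallest coordinate of $\tilde\pi^i_n$ is $\min(\pi^i_n(\theta^0),\pi^i_n(\theta^1))$ in either case. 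Hence the event ``a received realized signal has a coordinate $\le\beta$'' fires exactly at the sender's confidence time and never before, producing no false alarm. The non-initiator therefore detects the initiator's stopping precisely at $\tau$, and since the competition is resolved once an agent declares, it too stops at $\tau$; this gives $\tau^\A=\tau^\B=\tau$ almost surely and is exactly where received information serves only to anticipate the counterpart's stopping.

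For the decision rules, the initiator $i$ stops at $\tau$ with $\pi^i_\tau(\theta^{\Not k})=\bar\pi^i_\tau(\theta^{\Not k})\le\beta$, so declaring $\theta^k$ meets the error constraint at the threshold; stopping earlier is impossible because the constraint is not yet met, and a stricter threshold would only postpone $\tau$, so $T^i_{\theta^0}=T^i_{\theta^1}=\beta$ is optimal and yields \eqref{Eq: Decision Policy Stopping Agent}. The non-initiator $\Not i$ is forced to decide at $\tau$ before its own belief has reached $\beta$-confidence; by \cref{Proposition 2} and \cref{Remark 1} the received signal carries belief distribution $[0.5,0.5]^\top$ and hence no information about $\theta$, so its error-minimizing declaration given $\bar\pi^{\Not i}_\tau=\pi^{\Not i}_\tau$ is the maximum-a-posteriori choice of the more probable state, which is exactly the $0.5$-threshold rule \eqref{Eq: Decision Policy Alternate Agent}.

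The main obstacle I anticipate is making the detection step of the second paragraph both rigorous and consistent with the stated threshold policy \eqref{Eq: Decision-Policy}. One must justify that monitoring a received signal for an extreme coordinate is information the non-initiator actually possesses, that the coordinate-swap invariance genuinely rules out premature triggering before $\tau$, and --- most delicately --- that compelling the non-initiator to stop at $\tau$ follows from the competitive objective \eqref{Problem 4} rather than being imposed exogenously. Everything else reduces to the monotonicity in \cref{Remark 1} and the convergence guaranteed by \cref{Proposition 1}.
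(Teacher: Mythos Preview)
Your proposal is correct and follows essentially the same route as the paper: both arguments use \cref{Proposition 2}/\cref{Remark 1} to reduce the fused belief to \eqref{eq:bar-pi} and conclude that only the local Bayesian update can meet the constraint, both exploit the coordinate-swap invariance of the signaling map to let the non-initiator detect (but not read) the initiator's stop, and both appeal to the competitive setup to synchronize the stopping times before invoking the MAP rule for the non-initiator. The obstacle you flag---that the non-initiator's forced stop is justified by the competition rather than derived from \eqref{Problem 4}---is handled in the paper exactly as informally as you anticipate, so your proof would be at least as rigorous as the original.
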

\begin{proof}
From the definition of the decision policy $\mu^{i}_{n}$ in \eqref{Eq: Decision-Policy}, the conditional probability of error \eqref{Eq: Probability of error}), and the constraint \eqref{Problem 4: Constraint} of problem \eqref{Problem 4}, it follows that the optimal thresholds are $T^\A_{\theta^0}= T^\A_{\theta^1} = T^\B_{\theta^0} = T^\B_{\theta^1} = \beta$. From \cref{Proposition 1}, it follows that for any sample path (or except for a set of $\mathbb{P}$ measure $0$), the constraint in Problem~\eqref{Problem 2} is achieved at a random time by both agents. For any belief parameter sequence $\{w^{i}_n\}$, if $\pi^\A_n(\theta^{ k}) > \beta$ and $\pi^\B_n(\theta^{k}) > \beta$, for all $k\in\{0,1\}$, then it follows from \eqref{eq:bar-pi} that the probabilities of error associated with decisions $\mu^\A_n(\bar\pi^\A_n)$ and $\mu^\B_n(\bar\pi^\B_n)$ are greater than $\beta$. Thus, the only way an agent satisfies the constraint \eqref{Problem 4: Constraint} is if its local Bayesian update $\pi^i_n$ achieves the same. 

The random times at which the agents satisfy their constraints could be different. For a given sample path, suppose without loss of generality that agent~$\A$ satisfies the constraint \eqref{Problem 4: Constraint} and immediately stops to achieve the minimum time. In this case, $\pi^\A_n(\theta^{ k}) \leq \beta$. But suppose $\pi^\B_n(\theta^{k}) > \beta$, for all $k\in\{0,1\}$. That is, agent~$\A$ initiates the stopping of the process. However, due to the belief signaling policy \eqref{eq:signaling}, agent~$\B$ receives $\tilde\pi^\A_n$ and realizes either $\tilde\pi^\A_n(\theta^{\Not k})\leq \beta$ (in case $\A$ lies) or $\tilde\pi^\A_n(\theta^{k})\leq \beta$ (in case $\A$ is truthful). Whatever the case, agent~$\B$ knows that agent~$\A$ has achieved the constraint~\eqref{Problem 4: Constraint} and is going to stop at this iteration. Due to the competitive setup under consideration, irrespective of whether $\B$ has achieved the constraint~\eqref{Problem 4: Constraint}, it also stops. Thus, the equality of the stopping times follows. 

Suppose $\pi^\A_n(\theta^{\Not k}) \leq \beta$ at iteration $n$. Disregarding $\tilde{\pi}^\B_n$, the agent can let $\bar\pi^\A_n = \pi^\A_n$, stop and announce its decision as $D^\A_n = \theta^k$. The decision policy for agent~$\A$ at the stopping time $\tau$ could be expressed by \eqref{Eq: Decision Policy Stopping Agent}. Recall from the previous paragraph that agent~$\B$ is aware that agent~$\A$ is going to stop at iteration $n$ because either $\tilde\pi^\A_n(\theta^{k}) \leq \beta$ or $\tilde\pi^\A_n(\theta^{\Not k}) \leq \beta$; however, it does not know the decision $D^\A_n$ of $\A$. Even though agent~$\B$ stops at iteration $n$, it is optimal for agent~$\B$ to make its decision based on $\pi^\B_n$ (cf. \cref{Remark 1}), i.e., $w^\B_{\tau} =1$. The decision policy for agent~$\B$ is given by \eqref{Eq: Decision Policy Alternate Agent}, i.e., the thresholds at stopping are updated and the constraint in problem~\eqref{Problem 4} is not achieved for the agent. 

Finally, suppose $\pi^\A_n(\theta^{ k}) > \beta$, for all $k\in\{0,1\}$, at iteration $n$. Irrespective of receiving the true belief or corrupted belief from agent~$\A$, agent~$\B$ knows that agent~$\A$ is not going to stop at the current iteration because both $\tilde\pi^\A_n(\theta^{k}) > \beta$ and $\tilde\pi^\A_n(\theta^{\Not k}) > \beta$ according to \eqref{eq:signaling}. Hence, agent~$\B$ could ignore agent~$\A$'s corrupted belief and base its decision only on its local belief, i.e. $w^\B_n=1$, $\forall n < \tau$. 
\end{proof}

The above proposition shows that even though $i$ may initiate the stopping, $\Not i$ can detect it through $i$'s revealed belief $\tilde\pi^i_n$ and will also stop at the same time. However, we show in the next section that $\Not i$, the agent who did not initiate the stopping, may lose the competition in the sense that it incurs a higher probability of error.
Another observation in the proposition is that even when agents receive information from their counterparts, none of the agents find it optimal to incorporate it into their local beliefs. However, they leverage this shared information to anticipate when a counterpart will stop, and only then do they incorporate it into their beliefs.
This counterintuitive result stems from the information-theoretic properties of belief fusion. 
When one agent shares manipulated information to maximize uncertainty (as established in the previous section), the receiving agent's best response is to disregard this deliberately ambiguous signal entirely. 

\emph{Entropy-based interpretation.} 
    The framework can be understood through the lens of entropy. Note that the belief fusion policy utilizes the concavity of the entropy function. From the concavity of $H (\cdot)$ it follows that
    \begin{multline*}
        H(\bar{\pi}^i_n) \geq w^{i}_{n}H(\pi^{i}_n) \\
        +(1-w^{i}_{n})H\Bigg(\begin{bmatrix}
        \mathbb{P}(\theta = \theta^1 \mid  \{Y^{\Not i}_t = y^{\Not i}_t\}^n_{t=1})\\
        \mathbb{P}(\theta = \theta^0 \mid  \{Y^{\Not i}_t = y^{\Not i}_t\}^n_{t=1})
        \end{bmatrix}_{\Big|_{{\Not i} \to  i }}\Bigg).
    \end{multline*}
    Therefore, for an agent aiming to choose the belief fusion parameter for which $H(\bar{\pi}^i_n)$ is minimized, it has to choose $w^{i}_{n} = 1 $ or $w^{i}_{n}=0$ depending on which of the terms they multiply has a lower value. Since agents are incentivized to maximize the entropy of their shared beliefs, i.e., agent $\Not i$ chooses $\alpha^i$ to maximize the term that multiples $w^{i}_{n}=0$, recipients minimize their uncertainty by trusting only their own observations, i.e., agent $i$ inevitably chooses $w^{i}_n=1$ for $n<\tau^i$.
    The optimal confidence thresholds then balance the cost of continuing to collect samples against the potential cost of making an incorrect decision.
    \hfill $\diamond$

\subsection{Conditional Probability of Error}

Previously, we established that both agents stop at the same time. However, what’s the benefit to the agent initiating the stopping? This section analyzes the accuracy of agents' decisions within our sequential binary hypothesis testing framework. We derive the conditional probability of error for each agent at the time they stop collecting information and make their final decision. 

\begin{proposition}
\label{prop:prob-error}
For almost all sample paths (except a set of $\mathbb{P}$ measure $0$), the conditional probability of error of the agent initiating the stopping is less than or equal to $\beta$, while for the other agent is less than or equal to $0.5$. 
\end{proposition}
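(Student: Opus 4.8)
The plan is to reduce the statement entirely to the structural conclusions of \cref{Proposition 3} together with the closed-form identity \eqref{Eq: Probability of error} for the conditional probability of error. By \cref{Proposition 3}, on almost every sample path both agents stop at the common time $\tau$ with fusion weight $w^i_\tau=1$, so each agent's fused belief collapses to its own Bayesian posterior, i.e. $\bar\pi^i_\tau=\pi^i_\tau$ and $\bar\pi^{\Not i}_\tau=\pi^{\Not i}_\tau$. The almost-sure qualifier is inherited from \cref{Proposition 1}: outside a $\mathbb{P}$-null set the posteriors are well defined and $\tau$ is finite, so every statement below is pathwise deterministic once the sample path is fixed.

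First I would treat the agent $i$ that initiates the stopping. By definition of $\tau$, this agent satisfies $\pi^i_\tau(\theta^{\Not k})\le\beta$ for the state $\theta^k$ it declares, and its rule \eqref{Eq: Decision Policy Stopping Agent} uses threshold $\beta$. Substituting $D^i_\tau=\theta^k$ and $\bar\pi^i_\tau=\pi^i_\tau$ into \eqref{Eq: Probability of error} gives conditional error exactly $\pi^i_\tau(\theta^{\Not k})\le\beta$ by the stopping condition; this is precisely the constraint \eqref{Problem 4: Constraint}, so the first claim follows immediately.

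The second claim is the only part requiring an argument beyond bookkeeping, and it rests on recognizing that the non-initiating agent's rule \eqref{Eq: Decision Policy Alternate Agent} is a maximum-a-posteriori rule: it declares whichever state carries the larger posterior mass under $\bar\pi^{\Not i}_\tau=\pi^{\Not i}_\tau$. I would evaluate \eqref{Eq: Probability of error} in the two cases. If $D^{\Not i}_\tau=\theta^1$, then by \eqref{Eq: Decision Policy Alternate Agent} we have $\pi^{\Not i}_\tau(\theta^0)\le\pi^{\Not i}_\tau(\theta^1)$ and the error equals $\pi^{\Not i}_\tau(\theta^0)$; combined with $\pi^{\Not i}_\tau(\theta^0)+\pi^{\Not i}_\tau(\theta^1)=1$ this yields $\pi^{\Not i}_\tau(\theta^0)\le 0.5$. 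The case $D^{\Not i}_\tau=\theta^0$ is symmetric. Hence the error equals $\min\{\pi^{\Not i}_\tau(\theta^0),\pi^{\Not i}_\tau(\theta^1)\}\le 0.5$, which is the second claim.

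The main conceptual point, rather than a technical obstacle, is that the non-initiating agent is forced to stop prematurely by the competitive coupling and need not have reached $\pi^{\Not i}_\tau(\theta^{\Not k})\le\beta$, so the sharp threshold $\beta$ fails for it and only the generic maximum-a-posteriori guarantee of $0.5$ survives. I would stress that this $0.5$ bound holds for any fused belief, so it does not depend on the value of $w^{\Not i}_\tau$ but only on the maximum-a-posteriori form of \eqref{Eq: Decision Policy Alternate Agent}; the role of $w^{\Not i}_\tau=1$ from \cref{Proposition 3} is merely to reduce the conditioning to $\{Y^{\Not i}_t\}_{t=1}^{\tau}$, consistent with \eqref{Problem 4: Constraint}.
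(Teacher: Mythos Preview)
Your proposal is correct and follows essentially the same approach as the paper's own proof: both arguments reduce to observing that the initiating agent satisfies $\pi^i_\tau(\theta^{\Not k})\le\beta$ by definition of the stopping time, while the non-initiating agent's MAP rule \eqref{Eq: Decision Policy Alternate Agent} has conditional error $\min\{\pi^{\Not i}_\tau(\theta^0),\pi^{\Not i}_\tau(\theta^1)\}\le 0.5$. Your version is simply more explicit in invoking \cref{Proposition 3} and \eqref{Eq: Probability of error} to justify $\bar\pi^{\Not i}_\tau=\pi^{\Not i}_\tau$ and to identify the error with the posterior mass on the undeclared state.
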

\begin{proof}
For a given sample path, if both agents achieve the confidence threshold at the same iteration, then both of them achieve the desired accuracy.
Otherwise, only the agent that initiates the stopping achieves the desired accuracy (i.e., the conditional probability of error less than or equal to $\beta$). Agent~$\Not i$ by receiving $\tilde\pi^{i}_n$ detects that $i$ will stop at iteration~$n$, so it follows decision policy \eqref{Eq: Decision Policy Alternate Agent}, under which the confidence threshold for $\Not i$ is $\min(\pi^{\Not i}_{\tau}(\theta^0),\pi^{\Not i}_{\tau}(\theta^1)) \leq 0.5$. 
\end{proof}

The above proposition shows an important competitive dynamic between agents: the agent who initiates the stopping process (by reaching their confidence threshold first) achieves a conditional probability of error that is bounded by the predefined parameter $\beta$. Meanwhile, the other agent, who must make their decision under insufficient information, can only guarantee that their conditional probability of error is bounded by $0.5$.
This result highlights the strategic advantage gained by the agent who reaches confidence in their decision first. By initiating the stopping process, the winning agent secures a more reliable decision with a controlled error probability. The other agent, forced to decide simultaneously, faces a much higher probability of error---equivalent to what they would achieve by random guessing in the worst case. This competitive structure creates a ``first-to-decide'' advantage within collaborative information-gathering systems, where agents benefit from reaching confidence thresholds before their counterparts while maintaining acceptable accuracy levels.

\begin{remark}
Despite the agents behaving in a competing manner, the proposed framework is beneficial for agents in terms of stopping time. For a given probability of error, the stopping time for the process is less than or equal to the stopping time of the agents for every sample path (i.e., $\tau = \min(\tau^\A, \tau^\B)$) and in expectation. 
\hfill $\lrcorner$
\end{remark}

\begin{algorithm}
\caption{Sequential Binary Hypothesis Testing with Competing Agents}\label{Algorithm 1}
\begin{algorithmic}[1]
\Procedure{SBHT}{}
\State Each agent~$i\in\{\A,\B\}$ knows marginal distribution $\mathbb{P}(\theta = \theta^k \cap \{Y^i_t = y^i_t\}^n_{t=1})$, for $k=0,1$
\State $\mathtt{Stop} \gets 0$ and iteration $n \gets 1$
\State Desired probability of error $\beta \ll 0.5$
\While {$\mathtt{Stop} =0 $}
\State Agents collect observations $Y^i_n$, $i=\A,\B$
\State Perform local Bayesian update $\pi^{i}_n$, $i = \A, \B$
\State Transmit $\tilde\pi^i_n$ to $\Not i$, $i=\A,\B$, (\cref{Proposition 2})
\State Modify belief $\bar\pi^i_n$, $i=\A,\B$, (\cref{Proposition 3})
\If {$\exists i\in\{\A,\B\}$, $\exists k\in\{0,1\}$: $\pi^i_n(\theta^k)\leq \beta$,}
\State $\mathtt{Stop} \gets 1$
\State Decision policy $\mu^i_n(\bar\pi^i_n)$ given in \eqref{Eq: Decision Policy Stopping Agent}
\State Decision policy $\mu^{\Not i}_n(\bar\pi^i_n)$ given in \eqref{Eq: Decision Policy Alternate Agent}
\Else 
\State $n \gets n +1 $
\EndIf
\EndWhile
\EndProcedure
\end{algorithmic}
\end{algorithm}

\section{Numerical Example}

This section demonstrates the algorithm with an example. Suppose the common prior known to both agents be
\begin{align*}
    \mathbb{P}(\theta = \theta^1) =  \mathbb{P}(\theta = \theta^0) = \frac{1}{2}.
\end{align*}
The distributions of agents under the null ($\theta=\theta^0$) and the alternate ($\theta=\theta^1$) hypotheses are listed in \cref{Table 1}. The observations are considered to be independent in time.
We have
\begin{align*}
    D_{KL}( \mathbb{P} (Y^\A_n \mid \theta &= \theta^0) ,\mathbb{P} (Y^\A_n \mid \theta = \theta^1) ) =0.14 \\
    D_{KL}( \mathbb{P} (Y^\B_n \mid \theta &= \theta^0) ,\mathbb{P} (Y^\B_n \mid \theta = \theta^1) ) =0.496. 
\end{align*}
In \cref{fig:prob-error}, we plot the decay of the conditional probability of error for both agents across two different trials. We set $\beta=0.05$. Observe that, in trial 1 agent~$\A$ achieves the desired probability of error in 14 iterations, while agent~$\B$ achieves the same in only 10 iterations. In trial 2, agent~$\A$ takes 21 iterations while agent~$\B$ takes only 5 iterations. With this value of $\beta$, the average stopping time over 100 trials is as follows:
\begin{align*}
    \tau_{\text{avg}} = 6.95, \quad \tau^{\A}_{\text{avg}} = 19.70, \quad \tau^{\B}_{\text{avg}} = 7.95.
\end{align*}
With $\beta=0.01$, we have
\begin{align*}
    \tau_{\text{avg}} = 10.34, \quad \tau^{\A}_{\text{avg}} = 30.52, \quad \tau^{\B}_{\text{avg}} = 10.85.
\end{align*}
Thus, $\tau_{\text{avg}}$ for the system is closer to $\tau^{\B}_{\text{avg}}$, the average stopping time for agent~$\B$. This is because the higher KL divergence between the distributions under the null and alternative hypotheses enables agent~$\B$ to distinguish between the two hypotheses quickly. Among the $100$ trials, there were very few trials where agent $\A$ terminated quickly or won the competition. 
\begin{table}[!t]
\centering
\caption{Distributions $\mathbb P(Y_n^i \mid \theta)$ of the agents under different states of nature.}
\label{Table 1}
\begin{tabular}{|| c | c| c |c |c |c ||} 
 \hline
 Agent $\A$ & & & & & \\
 \hline
 $\theta = \theta^0$ & $0.1$ & $0.2$ & $0.1$ & $0.3$ & $0.3$ \\
 \hline
$\theta = \theta^1$ & $0.2$ & $0.15$ & $0.25$ & $0.2$ & $0.2$ \\
 \hline
 Agent $\B$ & & & & & \\
 \hline
 $\theta = \theta^0$ & $0.15$ & $0.25$ & $0.15$ & $0.25$ & $0.2$ \\
 \hline
$\theta = \theta^1$ & $0.4$ & $0.05$ & $0.35$ & $0.1$ & $0.1$ \\
 \hline
\end{tabular}
\end{table} 
\begin{figure}[!t]
    \centering
    \includegraphics[width = 0.45\textwidth]{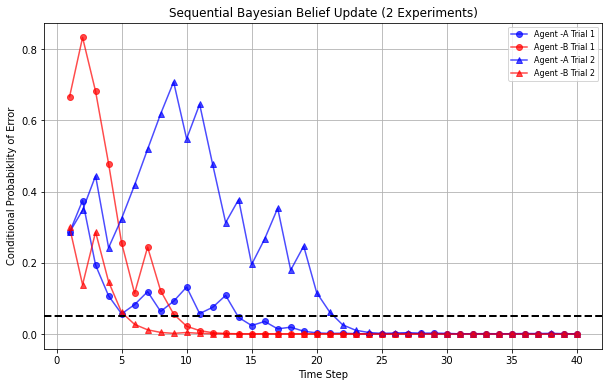}
    \caption{Decay of conditional probability of error with time}
    \label{fig:prob-error}
\end{figure}

\section{Conclusion and Future Work}

This paper investigated sequential hypothesis testing in competitive multi-agent environments under information asymmetry. Our analysis showed that when agents interact in settings with unreliable or strategically manipulated communication channels, conventional information-sharing protocols require fundamental reconsideration. Specifically, we showed that randomizing between truthful and inverted beliefs is the optimal signaling strategy to prevent the other agent from stopping first. This strategy effectively maximizes uncertainty while preserving information content. We also showed that agents achieve optimal performance by maintaining separate processing streams for their own observations versus information received from others. In particular, agents should update their beliefs based exclusively on their private observations until reaching their confidence thresholds. The information received from counterparts serves primarily as a mechanism to anticipate stopping decisions rather than as a direct input to belief formation. This result provides a practical approach to managing potentially manipulated information in distributed decision-making systems. Finally, we showed that the agent who reaches their confidence threshold first has an advantage, which creates an asymmetry with the first agent achieving bounded error probability while the second facing compromised accuracy.

Several promising directions exist for extending this research: Our current model considers a two-agent system with specific information structures. Extending the analysis to networks with multiple agents would enhance the applicability of our work to organizational and social systems. Secondly, investigating alternative observation models and information structures would strengthen the robustness of our results. This includes, for instance, scenarios with non-stationary observation distributions. Thirdly, developing adaptive learning mechanisms that enable agents to discover optimal strategies under uncertainty about their counterpart's behavior represents an important practical extension. This would involve incorporating online learning techniques to estimate the truthfulness parameters of other agents dynamically. Finally, relaxing the assumption of binary hypotheses to accommodate multiple possible states would generalize our results to scenarios with richer state spaces. This extension would increase the complexity of both the analysis and resulting strategies but would enable an application to settings where binary classifications are insufficient.

\bibliographystyle{ieeetr}
\bibliography{Biblio}
\end{document}